\title{Private Polynomial Computation from\\Lagrange Encoding}
\newtheorem{theorem}{Theorem}
\newtheorem{lemma}[theorem]{Lemma}
\newtheorem{example}[theorem]{Example}
\newtheorem{proposition}[theorem]{Proposition}
\newcommand{\Fq}{\mathbb{F}_{q}}
\newcommand{\bF}{\mathbb{F}}
\newcommand{\cA}{\mathcal{A}}
\newcommand{\cB}{\mathcal{B}}
\newcommand{\cC}{\mathcal{C}}
\newcommand{\cD}{\mathcal{D}}
\newcommand{\cE}{\mathcal{E}}
\newcommand{\cI}{\mathcal{I}}
\newcommand{\cP}{\mathcal{P}}
\newcommand{\cR}{\mathcal{R}}
\newcommand{\cS}{\mathcal{S}}
\newcommand{\cT}{\mathcal{T}}
\newcommand{\bolda}{\textbf{a}}
\newcommand{\boldb}{\textbf{b}}
\newcommand{\boldc}{\textbf{c}}
\newcommand{\boldd}{\textbf{d}}
\newcommand{\bolds}{\textbf{s}}
\newcommand{\boldt}{\textbf{t}}
\newcommand{\boldv}{\textbf{v}}
\newcommand{\boldx}{\textbf{x}}
\newcommand{\boldy}{\textbf{y}}
\newcommand{\boldA}{\textbf{A}}
\newcommand{\boldB}{\textbf{B}}
\newcommand{\boldC}{\textbf{C}}
\newcommand{\boldG}{\textbf{G}}
\newcommand{\boldX}{\textbf{X}}
\newcommand{\boldY}{\textbf{Y}}
\DeclareMathOperator{\spn}{span}
\DeclareMathOperator{\lcm}{lcm}
\newcommand{\boldalpha}{\boldsymbol{\alpha}}
\newcommand{\boldbeta}{\boldsymbol{\beta}}
\newcommand{\boldepsilon}{\boldsymbol{\epsilon}}
\newcommand{\boldrho}{\boldsymbol{\rho}}
\newcommand{\boldpsi}{\boldsymbol{\psi}}
\DeclareMathOperator{\image}{Im}
\DeclareSymbolFont{bbold}{U}{bbold}{m}{n}
\DeclareSymbolFontAlphabet{\mathbbold}{bbold}
\newcommand{\1}{\mathbbold{1}}
\author{\textbf{Netanel Raviv}$^\star$~and~\textbf{David A.~Karpuk}$^\dagger$\\
	\IEEEauthorblockA{\normalsize {$^\star$Department of Electrical Engineering, California Institute of Technology, Pasadena, CA 91125, USA.\\
			$^\dagger$Departamento de Matem\'{a}ticas, Universidad de los Andes, Bogot\'{a}, Colombia.}
		\thanks{Parts of this work were presented at the International Symposium on Information Theory (ISIT), Vail, CO, USA, 2018.}}}
\begin{document}
\maketitle
\begin{abstract}
Private computation is a generalization of private information retrieval, in which a user is able to compute a function on a distributed dataset without revealing the identity of that function to the servers. In this paper it is shown that Lagrange encoding, a powerful technique for encoding Reed-Solomon codes, enables private computation in many cases of interest. In particular, we present a scheme that enables private computation of polynomials of any degree on Lagrange encoded data, while being robust to Byzantine and straggling servers, and to servers colluding to attempt to deduce the identities of the functions to be evaluated. Moreover, incorporating ideas from the well-known Shamir secret sharing scheme allows the data itself to be concealed from the servers as well. Our results extend private computation to high degree polynomials and to data-privacy, and reveal a tight connection between private computation and coded computation.
\end{abstract}
\thispagestyle{empty}
\section{Introduction}\label{section:intro}
\subsection{Private Information Retrieval}
Private Information Retrieval (PIR) refers to the process of downloading a file from a database, without revealing to the database which file is being downloaded.  PIR was originally introduced in the seminal work \cite{chor} from the perspective of Computer Science, where the goal is to construct PIR scheme with minimal communication cost.  Much work has been done on PIR from this point of view, both under information-theoretic constraints~\cite{Yekhanin2} and computational ones~\cite{COMPsurvey}, of which we also briefly mention \cite{Gasarch,Yekhanin}.

Recently, due to rising interest in large scale distributed storage systems and emerging quantum attacks on computationally secure protocols, PIR has experienced a flurry of research from the Information Theory community.  In the Information-Theoretic formulation of the PIR problem, the primary performance metric of a PIR scheme is the download rate, which refers to the amount of downloaded bits per one information bit. The capacity of PIR, given a specific system setup, is the maximum possible download rate. Commonly, one assumes that files of length $K$ are distributed over $N$ different servers using an $[N,K]$ \textit{Maximum Distance Separable} (MDS) storage code, and every $T$ out of these $N$ servers can collude to try to deduce the identity of the file being downloaded.  The capacity of several variations on the PIR problem is now known, for example for $K = T = 1$ by \cite{SunJafar}, for $K = 1$ and $T\geq 1$ by \cite{SunJafarColluding}, and for $K\geq 1$ and $T = 1$ by \cite{BanawanUlukusCoded}.  Additional interesting and practically relevant variants on PIR appear when one accounts for unreliability of the servers, whether they be unresponsive, malicious, or curious as to the contents of the data they are storing.  Variants along these lines have been studied and capacity expressions have been derived in \cite{SunJafarColluding,BanawanUlukusByzantine,CrossSubspace}.

Many open questions about PIR remain, for example the PIR capacity is unknown when $K,T>1$, that is, when a non-trivial storage code is used and we have non-trivial server collusion; less is known when $K,T>1$ and we try to account for unreliable servers.  Given the difficulty of proving optimality in such generality, one is often content with achieving the more modest goal of constructing PIR schemes with large download rate and ignoring questions of capacity.  Algebraic constructions of PIR schemes for coded data with colluding servers appeared in \cite{RazanSalim,StarProduct}, which were then generalized in \cite{Karpuk2} to account for unresponsive and Byzantine servers.  Additional work along these lines has been done in \cite{transitive,NorwegiansNonMDS} for non-MDS storage codes.  These schemes generally fall under the umbrella of ``one-shot schemes'', a term recently christened in \cite{RafaelSalim}, where it was also shown how to lift such schemes to obtain schemes which achieve capacity in certain cases.  Reed-Solomon and Reed-Muller codes have been especially useful in such constructions, because of their compatibility with the star product between two linear codes (see Section~\ref{section:starproduct}).

\subsection{Private Computation}
Private Computation (PC) is a generalization of PIR wherein the user wants to not just privately download a file from the database, but privately compute an arbitrary function of it.  In PC, privacy refers to hiding the identity of the function to be computed.  One recovers the PIR problem by specifying to functions given by coordinate projections. The principal performance metric for PC protocols is the \textit{PC-rate} (or simply, rate), which is the ratio between the number of desired function evaluations and the total number of function evaluations downloaded.

For functions which are linear combinations of the files, PC was studied in \cite{PFR,SunJafarPC} for uncoded databases, and in \cite{ObeadKliewer,ObeadLinRosnesKliewer} for coded databases.  The case of non-linear functions, and especially polynomial functions of degree larger than one, was studied by the second author of the current work in \cite{Karpuk}, where a PC scheme was constructed for polynomial functions on systematically coded databases.

\subsection{Coded Computation}
The term \textit{Coded Computation} (CC) broadly refers to a family of techniques in which redundancy is added to datasets, in order to alleviate various issues that arise in distributed computations. Motivated by recent applications in machine learning, studies in Coded Computation have focused, e.g., on accelerating distributed tasks, combating malicious interventions, providing various forms of privacy, alleviating the communication load in iterative algorithms, and more. 

The study of this topic has been highly prolific in recent years, and works on the topic have mostly been task-specific. Typical tasks of interest include matrix multiplication~\cite{PolynomialCodes,ShortDot}, accelerating gradient descent algorithms~\cite{CycMDSandExp,GradientCoding1, GradientCodingSalman}, communication reduction~\cite{TradeoffComp}, and data shuffling~\cite{Shuffling1,Shuffling2}. Yet, coded computing of general polynomials has been addressed only recently in~\cite{Lagrange}, which is tightly connected to our results. In~\cite{Lagrange} it was shown that coding the data by using the well-known Lagrange polynomials can amend issues of resiliency, security, and privacy in many tasks of interest. 

\subsection{Current Contributions}
We present a Private Computation scheme for the evaluation of degree $G$ polynomials on $K$ data vectors $\boldx_k\in \Fq^M$, encoded using an $[N,K+E]$ Reed-Solomon code, which hides the identity of the polynomials to be computed from any~$T$ colluding servers, and hides the contents of the data vectors $\boldx_k$ from any $E$ colluding servers.  The scheme is robust against any $P$ stragglers and any $A$ adversaries, and has PC rate
\begin{alignat}{1}\label{mainrate}
R = \;&\frac{N-(G(K+E-1)+T+P+2A)}{N}\cdot\nonumber\\
&\frac{K}{G(K+E-1)+1},
\end{alignat}
which clearly requires that $N>(G(K+E-1)+T+P+2A)$. We provide a non-trivial example which illustrates the scheme construction.  Lastly, we propose an alternative scheme in the case that $P = A = E = 0$, which employs systematic data encoding and achieves a rate of
\[
R = \frac{\min\{N-(G(K-1)+T),K\}}{N}.
\]
This second scheme construction first appeared in \cite{Karpuk}. We provide some basic analysis as to the relative performance of the two schemes.

The scheme construction borrows ideas from Coded Computation, especially those of \cite{Lagrange}.  Data storage is realized by evaluating interpolating polynomials, which allows the evaluation of polynomials on encoded data to be viewed as the evaluation of a single variable polynomial.  This single-variable polynomial is the composition of two polynomial functions of known degree, and its degree can therefore be calculated explicitly and serves as crucial knowledge for the scheme construction and rate calculation.

Our techniques are applicable in many real-world scenarios. Beyond direct applications in distributed computations over finite fields, such as large scale matrix multiplication, one can also apply our techniques in real-number scenarios with minor additional effort, by quantization and embedding of real values into a large enough finite field. Such quantization techniques were employed in machine learning scenarios, e.g., in distributed linear regression in~\cite{Lagrange}, and in distributed logistic regression in~\cite{Jinhyun}.

\subsection{Comparison with Previous Work}
The scheme construction presented here also generalizes some ideas from \cite{Karpuk2} to deal with non-linear functions and data privacy.  In the case of $E = 0$ and $G = 1$ the current scheme achieves a rate of
\begin{equation}\label{mainPIRrate}
R = \frac{N-(K+T+P+2A-1)}{N}.
\end{equation}
If the functions to be computed are all distinct coordinate projections, then we reduce to the case of the PIR problem, and the rate of (\ref{mainPIRrate}) matches that of \cite{Karpuk2}.  If one further assumes that $P = A = 0$, one achieves a rate of
\begin{equation}\label{PLCrate}
R = \frac{N-(K+T-1)}{N}
\end{equation}
which is that of \cite{StarProduct}.  The rate (\ref{PLCrate}) is also the asymptotic rate (as the number of files $M\rightarrow\infty$) of \cite{ObeadLinRosnesKliewer} when $T = 1$, which studies linear computations of coded databases.  In all cases where $E = 0$ and the PIR or PC capacity $\mathsf{C}$ is known, the current scheme achieves the asymptotic capacity as the number of files grows, that is, $R = \lim_{M\rightarrow\infty}\mathsf{C}$.  

The capacity for the case $E > 0$ is subtler.  The case of $K = 1$, $G = 1$, $P  = A = 0$ is that of the recent work \cite{CrossSubspace}, which studies the PIR problem under the constraints of $T$-function privacy and $E$-data privacy, for storage systems in which every server stores an amount of data which is comparable to the entire dataset.  In \cite{CrossSubspace}, the authors show that
\[
\lim_{M\rightarrow\infty}\mathsf{C} = \left\{\begin{array}{cc}
\frac{N-(E+T)}{N} & \text{if } N > E+T \\
0 & \text{if } N \leq E + T
\end{array}\right.
\]
where $\mathsf{C}$ is the PIR capacity of this setting, and use a technique they deem \emph{Cross Subspace Alignment} to construct an explicit scheme which achieves the above rate.  The rate $R$ we achieve in this scenario is
\[
R = \frac{N - (E+T)}{N}\cdot \frac{1}{E+1}
\]
whenever $N > E+T$, which is strictly worse than that of \cite{CrossSubspace} when $E>0$.  This essentially stems from the fact that it is not clear how to align the noise terms arising from the function-privacy randomness and data-privacy randomness as in \cite{CrossSubspace} for non-linear functions.  Thus our PC scheme leaves room for improvement in the case of non-trivial data privacy.   

\section{Preliminaries}\label{section:preliminaries}
\subsection{Notation}
Throughout the current paper, we adopt the following notation.  For a given integer $N$, define $[N]$ to be the set $\{1,2,\ldots,N\}$ of the first $N$ natural numbers.  We use lowercase letters $a,b,c,\ldots$ for scalars or indexing variables, and uppercase letters $A,B,C,\ldots$ for system parameters, such as the number of servers, dimension of the storage code, etc.  Boldface lowercase letters $\bolda,\boldb,\boldc,\ldots$ are reserved for vectors, and boldface uppercase letters $\boldA,\boldB,\boldC,\ldots$ for matrices.  Calligraphic letters $\cA,\cB,\cC,\ldots$ are used to denote sets, including linear codes.  The symbol $\Fq$ is used to denote the finite field of cardinality $q$. 
Lowercase Greek letters such as $\phi,\psi,\rho,\ldots$ will generally be used to denote functions, and $\phi\circ\psi$ denotes the composition of $\phi$ and $\psi$, that is, $(\phi\circ\psi)(x) = \phi(\psi(x))$.

An $[N,K,D]_q$ code $\cC$ is a linear code over $\Fq$ with length $N$, dimension $K$, and minimum distance $D$.  We will often omit the $D$ or $q$ if they are unimportant or clear from context.  Recall that $\cC$ is MDS (Maximum Distance Separable) if $D = N-K+1$.

We let $\cP_{M,G}$ denote the set of all polynomials with coefficients in $\Fq$ in $M$ variables with total degree at most~$G$, that is,
\[
\cP_{M,G} = \{\phi\in \Fq[X_1,\ldots,X_M] : \deg(\phi)\leq G\}.
\]
Note that $\cP_{M,G}$ is a finite-dimensional vector space over $\Fq$ and therefore supports a uniform distribution.  The parameter $M$ will be of less interest to us in general and hence we will often write $\cP_G$ for $\cP_{M,G}$.

\subsection{Reed-Solomon Codes}
Given parameters $N$, $K$, and $q\geq N$, we construct the $[N,K]$ \emph{Reed-Solomon} code over $\Fq$ as follows.  Let $\boldalpha = (\alpha_1,\ldots,\alpha_N)$ be $N$ distinct elements of $\Fq$, and let
\[
\Fq[z]^{<K} = \{\phi\in \Fq[z] : \deg(\phi) \leq K-1\}
\]
We define the Reed-Solomon code $\cR\cS_K(\boldalpha)$ associated with this data to be the image of the evaluation map
\[
ev: \Fq[z]^{<K} \rightarrow \Fq^N,\quad ev(\phi) = (\phi(\alpha_1),\ldots,\phi(\alpha_N))
\]
It is a classical result that $\cR\cS_K(\boldalpha)$ is an MDS code.

We will be particularly interested in the encoding of Reed-Solomon codes.  If one chooses a basis $\phi_1,\ldots,\phi_K$ of the vector space $\Fq[z]^{<K}$, then a generator matrix (and therefore an encoding map) is given by
\[
\boldG = \begin{pmatrix}
\phi_1(\alpha_1) & \phi_1(\alpha_2) & \cdots & \phi_1(\alpha_N) \\
\phi_2(\alpha_1) & \phi_2(\alpha_2) & \cdots & \phi_2(\alpha_N) \\
\vdots & \vdots & \ddots & \vdots \\
\phi_K(\alpha_1) & \phi_K(\alpha_2) & \cdots & \phi_K(\alpha_N)
\end{pmatrix}.
\]
The following two choices of generator matrices will be the most important for our purposes.

\subsubsection{Canonical Encoding} The basis $1,z,\ldots,z^{K-1}$ of $\Fq[X]^{<K}$ gives rise to the Vandermonde generator matrix
\[
\boldG = \begin{pmatrix}
1 & 1 & \cdots & 1 \\
\alpha_1 & \alpha_2 & \cdots & \alpha_N \\
\vdots & \vdots & \ddots & \vdots \\
\alpha_1^{K-1} & \alpha_2^{K-1} & \cdots & \alpha_N^{K-1}
\end{pmatrix}
\]
Another way to view this encoding map is to map a message $\bolda = (a_1,\ldots,a_K)\in \Fq^{K-1}$ to the polynomial $\phi_{\bolda} = \sum_{i = 0}^{K-1}a_{i+1}z^i$.  The codeword is then $(\phi_{\bolda}(\alpha_1),\ldots,\phi_{\bolda}(\alpha_N)) = \bolda  \boldG$.

\subsubsection{Lagrange Encoding}\label{sec:lagrange_encode}\cite{Lagrange} In this case we choose a basis of $\Fq[z]^{<K}$ to consist of interpolating polynomials.  More specifically, choose some $\boldbeta = (\beta_1,\ldots,\beta_K)$ consisting of distinct $\beta_k\in \Fq$.  If $\bolda = (a_1,\ldots,a_K)\in \Fq^K$ is a message vector, define $u_{\bolda,\boldbeta}\in \Fq[z]^{<K}$ by the property $u_{\bolda,\boldbeta}(\beta_k) = a_k$ for all $k = 1,\ldots,K$.  The Lagrange Interpolation formula guarantees the existence of $u_{\bolda,\boldbeta}$, and the uniqueness is a consequence of the degree restriction. 

A generator matrix corresponding to this encoding map is given by
\begin{align}\label{equation:LagrangeMatrix}
&\boldG = \boldG(\boldalpha,\boldbeta)\triangleq \nonumber\\	&\begin{pmatrix}
\displaystyle\prod_{j\in[K]\setminus\{1\}}\frac{\beta_j-\alpha_1}{\beta_j-\beta_1} & \cdots & \displaystyle\prod_{j\in[K]\setminus\{1\}}\frac{\beta_j-\alpha_N}{\beta_j-\beta_1}\\
\displaystyle\prod_{j\in[K]\setminus\{2\}}\frac{\beta_j-\alpha_1}{\beta_j-\beta_2} & \cdots & \displaystyle\prod_{j\in[K]\setminus\{2\}}\frac{\beta_j-\alpha_N}{\beta_j-\beta_2}\\
\vdots&\vdots&\ddots&\vdots\\
\displaystyle\prod_{j\in[K]\setminus\{K\}}\frac{\beta_j-\alpha_1}{\beta_j-\beta_K} & \cdots & \displaystyle\prod_{j\in[K]\setminus\{K\}}\frac{\beta_j-\alpha_N}{\beta_j-\beta_K}
\end{pmatrix},
\end{align}
where we omit the notation~$\boldalpha,\boldbeta$ when clear from context.  If one defines $\beta_k = \alpha_k$ for $k = 1,\ldots,K$, then the above matrix $\boldG$ is in systematic form. The fact that~$\image \boldG(\boldalpha,\boldbeta)\triangleq\{ \bolda\boldG(\boldalpha,\boldbeta) \vert \bolda\in\bF_q^K \}$ equals~$\cR\cS_K(\boldalpha)$ for every~$\boldbeta$ might not be readily seen, but is easily proved in the following lemma.
    \begin{lemma}\label{lemma:LagrangeisRS}
    For every set~$\boldalpha\subseteq \bF_q$  of size~$N$, and every set~$\boldbeta\subseteq \bF_q$ of size~$K$, we have that
	$\image \boldG(\boldalpha,\boldbeta)=\cR\cS_K(\boldalpha)$.
\end{lemma}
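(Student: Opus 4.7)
The plan is to identify each row of $\boldG(\boldalpha,\boldbeta)$ as the evaluation vector of a specific polynomial in $\Fq[z]^{<K}$, and then show that the collection of these polynomials spans the whole of $\Fq[z]^{<K}$; linearity of evaluation then yields the claim.

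First, I would rewrite the $(k,n)$-entry of $\boldG(\boldalpha,\boldbeta)$. By pulling a factor of $-1$ from the numerator and denominator of each factor,
\[
\prod_{j\in[K]\setminus\{k\}}\frac{\beta_j-\alpha_n}{\beta_j-\beta_k}
= \prod_{j\in[K]\setminus\{k\}}\frac{\alpha_n-\beta_j}{\beta_k-\beta_j}
= L_k(\alpha_n),
\]
where $L_k(z) \triangleq \prod_{j\in[K]\setminus\{k\}}\frac{z-\beta_j}{\beta_k-\beta_j}$ is the standard Lagrange interpolating polynomial at the nodes $\boldbeta$, characterized by $L_k(\beta_j)=\delta_{kj}$. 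Hence the $k$-th row of $\boldG(\boldalpha,\boldbeta)$ equals $\ev(L_k) = (L_k(\alpha_1),\ldots,L_k(\alpha_N))\in\cR\cS_K(\boldalpha)$, which immediately proves the inclusion $\image \boldG(\boldalpha,\boldbeta)\subseteq\cR\cS_K(\boldalpha)$.

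Next I would argue that $\{L_1,\ldots,L_K\}$ is a basis of $\Fq[z]^{<K}$. Each $L_k$ has degree exactly $K-1$, so it lies in $\Fq[z]^{<K}$, and linear independence follows because if $\sum_k c_k L_k = 0$ as a polynomial, evaluating at $\beta_j$ yields $c_j=0$ for every $j\in[K]$. Since $\dim \Fq[z]^{<K}=K$, these $K$ independent polynomials form a basis. Because $\ev$ is $\Fq$-linear, the row span of $\boldG(\boldalpha,\boldbeta)$ equals $\ev(\spn\{L_1,\ldots,L_K\}) = \ev(\Fq[z]^{<K}) = \cR\cS_K(\boldalpha)$, giving the reverse inclusion.

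There is no real obstacle here; the only bookkeeping step is the sign manipulation that identifies the displayed entries of $\boldG(\boldalpha,\boldbeta)$ with the classical Lagrange basis polynomials evaluated at $\boldalpha$. Once this identification is made, the lemma reduces to the elementary observation that a change of basis on the domain of the evaluation map does not alter its image.
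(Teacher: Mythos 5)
Your proof is correct and uses essentially the same idea as the paper: both identify the rows of $\boldG(\boldalpha,\boldbeta)$ with evaluation vectors of the Lagrange interpolation polynomials and then observe that these polynomials span $\Fq[z]^{<K}$. The only cosmetic difference is that the paper phrases the reverse inclusion by exhibiting an explicit preimage $\boldy=(f(\beta_1),\ldots,f(\beta_K))$ for a given codeword $ev(f)$, whereas you package the same fact as the statement that a change of basis on the domain of $ev$ leaves its image unchanged.
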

\begin{proof}
	Let $\boldG = \boldG(\boldalpha,\boldbeta)$.  First, observe that~$\bolda \boldG=ev(u_{\bolda,\boldbeta})$ for every~$\bolda\in\bF_q^K$. This readily implies the inclusion~$\image \boldG\subseteq\cR\cS_K(\boldalpha)$, since $\cR\cS_K(\boldalpha)$ includes all evaluations at~$\boldalpha$ of all univariate polynomials of degree at most~$K-1$, one of which is~$u_{\bolda,\boldbeta}$. To prove the converse inclusion, let~$ev(f)\in\cR\cS_K(\boldalpha)$, and define~$\boldy=(f(\beta_1),\ldots,f(\beta_K))$.
	Since the polynomial~$u_{\boldy,\boldbeta}$ agrees with the polynomial~$f$ on~$\{\beta_i\}_{i=1}^K$, and since the degrees of both~$f$ and~$u_{\boldy.\boldbeta}$ are at most~$K-1$, if follows that~$f=u_{\boldy,\boldbeta}$, which implies that~$ev(u_{\boldy,\boldbeta})=ev(f)$, and hence~$ev(f)\in\image \boldG$.
\end{proof}

\subsection{Star Products}\label{section:starproduct}
Let $\boldx = (x_1,\ldots,x_N),\boldy= (y_1,\ldots,y_N)\in\Fq^N$.  Their \emph{star product} is defined to be
\[
\boldx\star\boldy = (x_1y_1,\ldots,x_Ny_N)\in \Fq^N
\]
Now let $\cC$ and $\cD$ be length $N$ linear codes over $\Fq$.  Their \emph{star product} $\cC\star\cD$ is another linear code of length $N$, defined to be
\[
\cC\star\cD = \spn_{\Fq}\{\boldc\star\boldd : \boldc \in \cC,\ \boldd \in \cD\}.
\]
Similarly, if $G\geq 1$ is any positive integer, we define
\[
\cC^{\star G} = \underbrace{\cC\star\cdots\star\cC}_{G\text{ times}} = \spn_{\Fq}\{\boldc_1\star\cdots\star\boldc_G : \boldc_g\in \cC\}.
\]

Reed-Solomon codes are especially well-behaved with respect to star products, since the star product of two evaluation vectors is the evaluation vector of the product of the two functions.  The following proposition essentially appears in \cite{StarProduct} and \cite{Karpuk}, but we include a proof here for the sake of completeness.
\begin{proposition}\label{proposition:starRS}
    Let $\cR\cS_K(\boldalpha)$ and $\cR\cS_T(\boldalpha)$ be Reed-Solomon codes of length $N$ with the same evaluation vector $\boldalpha$.  Then:
    \begin{itemize}
        \item[(i)] $\cR\cS_K(\boldalpha)\star\cR\cS_T(\boldalpha) = \cR\cS_{\min\{K+T-1,N\}}(\boldalpha)$, and \item[(ii)] $\cR\cS_K(\boldalpha)^{\star G} = \cR\cS_{\min\{G(K-1)+1,N\}}(\boldalpha)$.
    \end{itemize}
\end{proposition}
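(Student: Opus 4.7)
My plan is to exploit the fundamental compatibility between Reed-Solomon codes and polynomial multiplication: if $\boldc = ev(f)$ and $\boldd = ev(g)$, then $\boldc \star \boldd = ev(fg)$. This reduces both statements to elementary observations about degrees of products of univariate polynomials.

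For part~(i), I will first establish $\cR\cS_K(\boldalpha)\star\cR\cS_T(\boldalpha) \subseteq \cR\cS_{\min\{K+T-1,N\}}(\boldalpha)$ from the fact that a generator $ev(f)\star ev(g) = ev(fg)$ uses a polynomial of degree at most $K+T-2$, together with the observation that $\cR\cS_N(\boldalpha) = \Fq^N$ absorbs the saturation regime $K+T-1 > N$. For the reverse containment, by linearity it suffices to show that each monomial evaluation $ev(z^i)$ for $0 \leq i \leq \min\{K+T-2, N-1\}$ already lies in $\cR\cS_K(\boldalpha)\star\cR\cS_T(\boldalpha)$. Setting $a = \min(i, K-1)$ and $b = i - a$, a short case check gives $a \leq K-1$ and $b \leq T-1$, so that $ev(z^i) = ev(z^a) \star ev(z^b)$. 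Since these monomial evaluations span $\cR\cS_{\min\{K+T-1,N\}}(\boldalpha)$, both inclusions follow.

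Part~(ii) then follows by induction on $G$. The base case $G = 1$ is immediate. For the inductive step, I would write $\cR\cS_K(\boldalpha)^{\star G} = \cR\cS_K(\boldalpha)^{\star(G-1)} \star \cR\cS_K(\boldalpha)$, apply the induction hypothesis $\cR\cS_K(\boldalpha)^{\star(G-1)} = \cR\cS_{K'}(\boldalpha)$ with $K' = \min\{(G-1)(K-1)+1, N\}$, and invoke part~(i) to obtain $\cR\cS_{\min\{K'+K-1, N\}}(\boldalpha)$. A brief case analysis verifies $\min\{K'+K-1, N\} = \min\{G(K-1)+1, N\}$: when the inner minimum equals $(G-1)(K-1)+1$ the equality is algebraic, and when it equals $N$ the outer minimum is again $N$, which matches $\min\{G(K-1)+1, N\}$ because $(G-1)(K-1)+1 \geq N$ together with $K \geq 1$ implies $G(K-1)+1 \geq N$.

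The only subtle point is ensuring that the saturation regime $K+T-1 \geq N$ is handled uniformly with the generic case; this is resolved by the remark that polynomials of degree at most $N-1$ already interpolate every $\Fq$-valued function on $N$ distinct points, so the evaluation map is surjective onto $\Fq^N$ once degrees are large enough. Beyond this, the argument is a routine translation between polynomial degree bounds and star-product membership, and I anticipate no major obstacles.
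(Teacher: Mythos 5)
Your proof is correct and takes essentially the same approach as the paper: the explicit monomial split $z^i = z^{\min(i,K-1)}\cdot z^{i-\min(i,K-1)}$ is a concrete instantiation of the paper's observation that every polynomial of degree less than $K+T-1$ is a linear combination of products $fg$ with $\deg f<K$, $\deg g<T$, and your induction on $G$ with the saturation case analysis is exactly the paper's ``easy induction argument on $G$,'' with the bookkeeping spelled out.
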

\begin{proof}
    Let $f,g\in \Fq[z]$, where $\deg(f)<K$ and $\deg(g)<T$.  Then $\deg(fg)<K+T-1$, and hence $\cR\cS_K(\boldalpha)\star\cR\cS_T(\boldalpha) \subseteq \cR\cS_{\min\{K+T-1,N\}}(\boldalpha)$. Conversely, every polynomial of degree less than $K+T-1$ is a linear combination of polynomials of the form $fg$, where $\deg(f)< K$ and $\deg(g)< T$.  This suffices to prove part (i).  Part (ii) is proved using part (i) and an easy induction argument on $G$.
\end{proof}

\subsection{The Shamir Secret Sharing Scheme}
The perfect privacy guarantees in the sequel can be seen as a special case of the Shamir secret sharing scheme~\cite{Shamir}.
By now a classic result, the Shamir secret sharing scheme allows~$N$ parties to share~$L$ secrets, such that sets of at most~$X$ parties cannot infer anything about the secrets, and sets of at least~$X+L$ parties can reconstruct all~$L$ secrets. The Shamir scheme relies on linear encoding of the following form. Let~$\bolds_1,\ldots,\bolds_L$ be the secrets, seen as column vectors over a finite field. An external trusted party generates~$X$ random column vectors~$\boldt_1,\ldots,\boldt_X$ of the same length as the secrets, performs linear encoding
\begin{align*}
    [\bolds_1,\ldots,\bolds_L,\boldt_1,\ldots,\boldt_X]\cdot \begin{pmatrix}\boldA\\\boldB\end{pmatrix}=(\boldy_1,\ldots,\boldy_N),
\end{align*}
where~$\boldA\in\bF_q^{L\times N}$ and~$\boldB\in \bF_q^{X\times N}$, and distributes the \textit{shares}~$\boldy_1,\ldots,\boldy_N$ to the parties. The following lemma is well-known, and will be most useful in the sequel.
\begin{lemma}\label{lemma:Shamir}
    If~$\boldB$ is an MDS matrix (i.e., if every~$X\times X$ submatrix of it is invertible), then
    \[
    I(\bolds_1,\ldots,\bolds_L;\boldy_{n_1},\ldots,\boldy_{n_X})=0
    \]
    for every subset~$\{n_1,\ldots,n_X\}\subseteq[N]$.
\end{lemma}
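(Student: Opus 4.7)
The plan is to show that, conditional on any realization of the secrets $\bolds_1,\ldots,\bolds_L$, the joint distribution of the tuple $(\boldy_{n_1},\ldots,\boldy_{n_X})$ is the \emph{same} (in fact, uniform over its ambient space). Standard arguments then give that the shares are independent of the secrets, and hence the mutual information vanishes.

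First, I would extract the relevant submatrices. Fix a subset $\{n_1,\ldots,n_X\}\subseteq[N]$, and let $\boldA'\in\bF_q^{L\times X}$ and $\boldB'\in\bF_q^{X\times X}$ be the submatrices of $\boldA$ and $\boldB$ consisting of the columns indexed by $n_1,\ldots,n_X$.  Writing the secrets and randomness as matrices $\boldS=[\bolds_1,\ldots,\bolds_L]$ and $\boldT=[\boldt_1,\ldots,\boldt_X]$, the defining encoding of the scheme yields
\[
[\boldy_{n_1},\ldots,\boldy_{n_X}] \;=\; \boldS\,\boldA' \,+\, \boldT\,\boldB'.
\]
By the MDS hypothesis on $\boldB$, the submatrix $\boldB'$ is invertible.

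Next, I would exploit uniformity of the masks.  Since each $\boldt_x$ is an independent uniformly random column vector over $\bF_q$, the entries of $\boldT$ are i.i.d.\ uniform.  Equivalently, each row of $\boldT$ is an independent uniform vector in $\bF_q^X$.  Right-multiplication by the invertible matrix $\boldB'$ is a bijection on $\bF_q^X$, so each row of $\boldT\boldB'$ is still uniform on $\bF_q^X$ and the rows remain independent; hence $\boldT\boldB'$ has the same distribution as $\boldT$ itself.  Consequently, for any fixed value of $\boldS$, the conditional distribution of $\boldS\boldA'+\boldT\boldB'$ is obtained from a uniform distribution by translation by the constant $\boldS\boldA'$, and is therefore itself uniform and does not depend on $\boldS$.

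Finally, conditional independence of $(\boldy_{n_1},\ldots,\boldy_{n_X})$ from $(\bolds_1,\ldots,\bolds_L)$ is exactly the statement that the mutual information vanishes, which gives the lemma.  The only subtle point, and the only place where the MDS assumption is used, is the invertibility of $\boldB'$; every other step is a routine application of uniformity and a change of variables.  I do not anticipate any real obstacle beyond keeping the matrix dimensions and the roles of rows versus columns straight.
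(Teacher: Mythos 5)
The paper does not actually supply a proof of Lemma~\ref{lemma:Shamir}; it is stated as a well-known fact about the Shamir scheme and used without argument. Your blind proof is therefore not being compared against a competing argument in the paper, but it is correct and is precisely the standard one-line argument: restrict to the columns $n_1,\ldots,n_X$, use the MDS hypothesis to get that the $X\times X$ block $\boldB'$ is invertible, observe that right-multiplication of the uniform mask $\boldT$ by the invertible $\boldB'$ preserves its uniform distribution over $\bF_q^{m\times X}$, and conclude that the observed shares $\boldS\boldA'+\boldT\boldB'$ are conditionally uniform and hence independent of the secrets. The matrix bookkeeping (rows of $\boldT$ being i.i.d.\ uniform in $\bF_q^X$, the bijectivity argument row by row) is all sound and the MDS hypothesis is invoked exactly where it is needed. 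No gap.
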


\subsection{Private Computation of Coded Data}\label{section:preliminariesPC}
We consider the problem of Private Computation on distributed storage systems of the following type; this follows a standard setup in the PIR literature, see \cite{RazanSalim,BanawanUlukusCoded,StarProduct}.  Let $\boldx_1,\ldots,\boldx_K\in \Fq^{M\times 1}$, and let $\cC$ be an $[N,K]$ code over $\Fq$ with generator matrix $\boldG\in \Fq^{K\times N}$.  Define vectors $\boldy_n\in \Fq^{M\times 1}$ for $n = 1,\ldots,N$ by
\begin{align}\label{equation:dataencoding}
    \left[\boldy_1\ \cdots\ \boldy_N\right] &= \boldX\cdot\boldG\mbox{, where}\nonumber\\
    \boldX&\triangleq \left[\boldx_1\ \cdots\ \boldx_K\right].
\end{align}
The vector $\boldy_n$ is stored on server $n$.  We refer to $\cC$ as the \emph{storage code}. 

Given the above setup, a user wishes to compute $\phi_b(\boldx_k)$ for some functions $\phi_1,\ldots,\phi_B$, for all $k = 1,\ldots,K$.  We assume that the functions $\phi_b$ all belong to some (necessarily finite-dimensional) vector space $\cS$ of functions $\Fq^{M\times 1}\rightarrow\Fq$.  To accomplish this goal, the user sends $S$ \emph{queries} $\rho_n^{(1)},\ldots,\rho_n^{(S)}\in\cS$ to the $n$'th server, who responds with the \emph{answers} $\rho_n^{(1)}(\boldy_n),\ldots,\rho_n^{(S)}(\boldy_n)$.  From all $NS$ answers, the user must be able to decode the desired function evaluations:
\[
H(\{\phi_b(\boldx_k)\}|\{\rho_n^{(s)}(\boldy_n)\}) = 0.
\]
It is useful to think of the above as happening over $S$ \emph{rounds} or \emph{iterations}, so that during the $s$'th round the user queries the servers with the functions $\rho_n^{(s)}$ and obtains the answers $\rho_n^{(s)}(\boldy_n)$.  Similarly, it is useful to think of the parameter $B$ as analogous to the block length of a file in traditional PIR.  We view the parameters $B$ and $S$ as free for the user to adjust to maximize their download rate.  Here the terms $\phi_b(\boldx_k)$ are random variables in the sense that the contents of the database are, to the user, unknown and therefore best treated as random. The terms $\rho_n^{(s)}(\boldy_n)$ are random variables in the sense that the queries $\rho_n^{(s)}$ are sampled according to some distribution employed by the user to preserve privacy.  We forego making this precise for the sake of readability. 

Our primary function space of interest is $\cS = \cP_G = \cP_{M,G}$, the space of polynomial functions of total degree at most $G$ from $\Fq^{M\times1}$ to $\Fq$.  Thus we in general have $\phi_b,\rho_n^{(s)}\in \cP_{G}$, for every~$b\in[B]$, $n\in[N]$, and~$s\in[S]$.

For any $T$-subset $\mathcal{T} = \{n_1,\ldots,n_T\}$ of $[N]$, we let $\rho_{\mathcal{T}}$ be the joint distribution of all $\rho_{n}^{(s)}$ for all $n\in \mathcal{T}$ and all $s = 1,\ldots,S.$  PC scheme has \emph{$T$-function-privacy} if
\begin{align*}
&I(\phi_1,\ldots,\phi_B;\rho_{\mathcal{T}}) = 0\quad\\
&\text{for all $T$-subsets $\{n_1,\ldots,n_T\}$ of $[N]$.}
\end{align*}
That is, a PC scheme has $T$-function-privacy if the identities of the functions $\phi_1,\ldots,\phi_B$ to be computed remain private even after any $T$ of the servers collude to attempt to deduce the identities of the $\phi_b$.

A PC scheme has \emph{$E$-data-privacy} if
\begin{align*}
&I(\boldx_1,\ldots,\boldx_K;\boldy_{n_1},\ldots,\boldy_{n_E}) = 0 \quad\\
 &\text{for all $E$-subsets $\{n_1,\ldots,n_E\}$ of $[N]$.}
\end{align*}
That is, the servers in the distributed storage system remain oblivious to the contents of the uncoded data, even if $E$ of them collude to attempt to deduce the identities of the $\boldx_k$.

A PC scheme is \emph{robust against $P$ stragglers} or \emph{unresponsive servers} if the user is still able to decode the values $\phi_b(\boldx_k)$ even if, during any round of the scheme, up to $P$ servers respond with an erasure symbol $?$ instead of the true answer $\rho_n^{(s)}(\boldy_n)$.  Similarly, a PC scheme is \emph{robust against $A$ adversaries} or \emph{Byzantine servers} if the user is still able to decode the values $\phi_b(\boldx_k)$ even if, during any round of the scheme, up to $A$ servers respond with an arbitrary element of $\Fq$ instead of the true answer $\rho_n^{(s)}(\boldy_n)$.  From a coding-theoretic perspective, having $P$ stragglers and $A$ adversaries simply means that during the $s$'th round, the user receives the total response vector
\[
(\rho_1^{(s)}(\boldy_1),\ldots,\rho_N^{(s)}(\boldy_N)) + \boldepsilon^{(s)}
\]
where $\boldepsilon^{(s)}$ is a vector containing at most $P$ erasure symbols $?$ and at most $A$ non-zero elements of $\Fq$.  Here the erasure symbol is understood to be absorbing with respect to addition: $x+? = ?$ for all $x\in \Fq$.  See \cite{SunJafarColluding,BanawanUlukusByzantine,Karpuk2} for more on PIR from systems with stragglers and adversaries.

Given a PC scheme, our principal metric of efficiency will be the \emph{download rate}, also referred to as the \emph{PC rate} or simply \emph{rate}, which is defined to be
\begin{equation}\label{ratedefn}
R = \frac{KB}{NS}.
\end{equation}
That is, the rate $R$ is the number of desired function evaluations $\phi_b(\boldx_k)$ the user obtains, divided by the total number of function evaluations $\rho_n^{(s)}(\boldy_n)$ downloaded.  From a strict Information-Theoretic point of view, it would be more correct to define the rate to be
\begin{equation}\label{ratedefn2}
R = \frac{H\left(\{\phi_b({\boldx_k)\}}\right)}{\sum_{n = 1}^N\sum_{s = 1}^SH\left(\rho_n^{(s)}(\boldy_n)\right)},
\end{equation}
which better accounts for potential dependencies between the variables $\phi_b(\boldx_k)$.  If all $\phi_b(\boldx_k)$ are independent for all $b,k$, and furthermore $H(\phi_b(\boldx_k)) = H(\rho_n^{(s)}(\boldy_n))$ for all $b,k,n,s$, then the two expressions in (\ref{ratedefn}) and (\ref{ratedefn2}) for the rate coincide.  This is indeed the case given reasonable independence conditions on the data vectors and the functions to be evaluated, but for the sake of compactness and readability we will ignore these subtleties and use (\ref{ratedefn}) as our definition of rate.

Much of the current literature on PIR concerns itself with establishing the \emph{capacity} $\mathsf{C}$ of a given PIR setup.  The capacity of a PIR problem is defined to be the supremum of all possible PIR rates.  As the current work is only concerned with explicit scheme constructions and not with establishing the capacity of any PC setup, we will mostly ignore this notion in the sequel.

We aggregate all of the important parameters of our system in Table \ref{tab:parameters}.

\begin{table}
	\begin{center}
		\small\begin{tabular}{|c|p{7cm}|}
			\hline
			$q$ & field size \\\hline
			$K$ & number of uncoded data vectors $\boldx_k$ \\\hline
			$M$ & length of vectors $\boldx_k$ \\\hline
			$G$ & degree of polynomial functions to be evaluated on the $\boldx_k$ \\\hline
			$N$  &  number of servers \\\hline
			$\mathcal{C}$  &  $[N,K]$ storage code \\\hline
			$T$ & number of colluding function-curious servers \\\hline
			$E$ & number of colluding data-curious servers \\\hline
			$A$ & number of adversaries / Byzantine servers \\\hline
			$P$ & number of stragglers / unresponsive servers \\\hline
			$S$ & number of rounds \\\hline
			$B$ & number of functions to be evaluated \\\hline
			$R$ & download rate of PC scheme \\\hline
		\end{tabular}
	\end{center}
	\caption{Important parameters used in Private Computation schemes.}	\label{tab:parameters}  
\end{table}

\subsection{Private Information Retrieval as Private Computation and a Remark on Upload Cost}\label{section:uploadcost}
The problem of Private Information Retrieval is a special case of the problem of Private Computation described above.  Indeed, in PIR one typically supposes that the data matrix $\boldX$ is composed of several row blocks $\boldx^i\in \Fq^{B\times K}$, which are the files of the system.  Setting the functions $\phi_b$ to be the $B$ coordinate projections corresponding to the $B$ coordinates of some $\boldx^i$, we see that computing $\phi_b(\boldx_k)$ for all $b=1,\ldots,B$ and all $k = 1,\ldots,K$ is equivalent to retrieving each row of $\boldx^i$, that is, downloading the file $\boldx^i$.  Since coordinate projections have degree $G = 1$, any PC scheme for privately computing polynomials of arbitrary degree $G$ specifies to a PIR scheme when we set $G = 1$ and choose the $\phi_b$ to all be distinct coordinate projections.

In Private Information Retrieval, one typically justifies ignoring the upload cost of a PIR scheme by assuming that the files $\boldx^i$ have entries in some field extension $\mathbb{K}$ of $\Fq$, while the storage code $\mathcal{C}$ and queries $\rho_n^{(s)}$ all remain defined over $\Fq$.   Provided that $[\mathbb{K}:\Fq]\gg0$, the upload cost is dominated by the download cost, which justifies using the download rate as the sole performance metric.  Generally, one can ignore the extension field $\mathbb{K}$ as all important operations and analysis occur over the base field $\Fq$.

Similar considerations allow one to ignore the upload cost in Private Computation.  One can suppose that the data matrix is defined over some large extension $\mathbb{K}$ of $\Fq$, and that the function space $\cP_G$ consists of polynomials with coefficients in $\Fq$ itself.  Communicating an arbitrary polynomial in $\cP_G$ costs $\sum_{i = 0}^G\binom{M}{i}$ elements of $\Fq$, and therefore the user's total upload cost will be $NS\sum_{i = 0}^G\binom{M}{i}$.  However, the total download cost is easily seen to be $NS\cdot [\mathbb{K}:\Fq]$ many elements of $\Fq$, and therefore if $\mathbb{K}$ satisfies $[\mathbb{K}:\Fq]\gg \sum_{i = 0}^G\binom{M}{i}$, one can ignore the upload cost.  We note that the upload cost increases very quickly in $G$, and therefore in practice this inequality may be satisfied only for small~$G$. 

Finally, we note that a simple relaxation of the privacy requirement can drastically reduce the upload cost. Notice that the expression $NS\sum_{i=0}^{G}\binom{M}{i}$ for upload cost stems from the expression~$NS\dim_{\bF_q}(\cP_G)$, where~$\dim_{\bF_q}(\cP_G)$ is the dimension of the set~$\cP_G$ as a subspace over~$\bF_q$. More generally, in order to communicate a function in a subspace~$\cP\subseteq \cP_G$, where~$\cP$ is known to all, one only has to communicate~$\dim_{\bF_q}(\cP)$ field elements.	Hence, one can determine any such subspace~$\cP$, either a priori or by communicating a sparse basis of it. Then, one can apply an identical scheme in which $\phi_b\in\cP$ for every~$b\in[B]$. The resulting privacy, however, will be restricted to~$\cP$, that is, the definition of~$T$-function privacy will be reduced to $$I(\phi_1,\ldots,\phi_B\vert \forall i\in[B],\phi_i\in\cP;\rho_\cT\vert \forall n_i\in\cT,\rho_{n_i}\in\cT)=0$$ for all~$T$-subsets~$\{n_1,\ldots,n_T\}$ of~$[N]$.

\section{A General Construction}\label{section:MainScheme}
In this section we present the main Private Computation scheme of the paper which applies for $N > (G(K+E-1)+T+P+2A)$. With the system parameters as in Table~\ref{tab:parameters}, the PC scheme has rate
\begin{align*}
R =\;& \frac{N - (G(K+E-1)+T+P+2A)}{N}\cdot\\ &\frac{K}{G(K+E-1)+1}.
\end{align*}
In what follows, we describe the data encoding procedure, then present the scheme in broad strokes.  In the third subsection, we present the scheme in detail, with an emphasis on the presentation of the first two rounds of the scheme for the sake of clarity.  We show that the scheme has the stated rate, has $E$-data privacy, has $T$-function privacy, and is robust against any $P$ stragglers and any $A$ adversaries.

\subsection{Data Encoding}
Let us first describe the data encoding procedure, which follows the Lagrange encoding of Reed-Solomon codes as in Section \ref{sec:lagrange_encode}.  Let $\boldx_k\in\Fq^M$ for $k = 1,\ldots,K$ be the $K$ data vectors, and let $\boldt_1,\ldots,\boldt_E\in\Fq^M$ be $E$ i.i.d.\ uniform random vectors.  Define $\boldX = \left[\boldx_1\cdots \boldx_K\ \boldt_1\cdots\boldt_E\right]$. Let $\boldbeta = (\beta_1,\ldots,\beta_{K+E})\in\Fq^{K+E}$ consist of $K+E$ distinct elements of $\Fq$.  Define the interpolation polynomial $u_{\boldX}(z) = u_{\boldX,\boldbeta}(z)$ by the property
\begin{align*}
u_{\boldX}(\beta_k) &= \boldx_k & \text{for }k &= 1,\ldots,K \\
u_{\boldX}(\beta_{K+e}) &= \boldt_e & \text{for }e &= 1,\ldots,E.
\end{align*}
By basic facts about polynomial interpolation, we have $\deg(u_{\boldX}(z))\leq K+E-1$.  Note that $u_\boldX(z)$ is more accurately described as a vector of polynomials of length $M$, that is, $u_\boldX(z)\in\Fq[z]^M$.  However, we continue to refer to it as a polynomial for simplicity.  We now choose an evaluation vector  $\boldalpha = (\alpha_1,\ldots,\alpha_N)\in \Fq^N$ where the $\alpha_n$ are all distinct and non-zero, and set
\[
\boldy_n = u_{\boldX}(\alpha_n)
\]
which is then stored on server $n = 1,\ldots,N$.  If we further have $\{ \beta_k \}_{k\in[K]}\cap \{ \alpha_n \}_{n\in[N]}=\varnothing$, then it is straightforward to show using Lemma \ref{lemma:Shamir} that we have $E$-data privacy. See \cite[Section IV.C]{Lagrange} for more details.  

\subsection{Basic Scheme Outline}  
To begin, define $N' \triangleq N - (P+2A)$, and let
\[
H \triangleq N'-(G(K+E-1)+T),\quad L \triangleq G(K+E-1)+1.
\]
Define $B$ and $S$ to be positive integers satisfying the equation
\[
BL = HS.
\]
The exact choice of $B$ and $S$ is not of crucial importance, so one can choose them to be minimal or simply set $S = L$ and $B = H$ for simplicity, which are the minimal solutions whenever $\gcd(L,H) = 1$.

The scheme will evaluate $\phi_b(\boldx_k)$ for $B$ functions $\phi_b\in\cP_G$.  This will be accomplished by downloading all coefficients of $\gamma_b(z)\triangleq\phi_b(u_{\boldX}(z))$, and then evaluating the $\gamma_b(z)$ on the $\beta_k$.  We have $\deg(\gamma_b(z))\leq G(K+E-1)$, and thus $L$ is the number of coefficients we need to download to completely determine a single $\gamma_b(z)$.  As the individual coefficients of $\gamma_{b}(z)$ will also play a role in the scheme construction, we define $\gamma_{b\ell}$ by
\[
\gamma_b(z) = \sum_{\ell = 0}^{L-1}\gamma_{b\ell}z^\ell
\]
The parameter $H$ is defined so that the user will download $H$ unique coefficients $\gamma_{b\ell}$ during each round of the scheme.

We define some $\zeta_b^{(s)}(z)\in\Fq[z,z^{-1}]$ for $b = 1,\ldots,B$ and $s = 1,\ldots,S$, whose precise nature will be made clear in the scheme construction.  We also choose\footnote{As mentioned in Subsection~\ref{section:uploadcost}, one can restrict the attention to some subspace~$\cP\subseteq \cP_G$ to reduce upload costs, in which case we must have~$\phi_b\in\cP$ for every~$b\in [B]$ and~$\psi_t^{(s)}\in\cP$ for every~$s$ and~$t$. The resulting query in~\eqref{equation:queryfunctions} will be in~$\cP$ as a linear combination. The guaranteed privacy will be restricted to~$\cP$, i.e., an attacker will not be able to learn anything about the functions~$\phi_1,\ldots,\phi_B$ other than them being in~$\cP$.} $\psi_t^{(s)}\in\cP_G$ to be i.i.d.\ uniform random elements, which are chosen anew during each round.  Now set
\[
\rho^{(s)} = \sum_{b = 1}^B\zeta_b^{(s)}(z)\phi_b + z^H\sum_{t = 1}^Tz^{t-1}\psi_t^{(s)}
\]
Given our evaluation points $\alpha_n\in\Fq$, we construct \emph{query functions} $\rho_n^{(s)}$, for $n = 1,\ldots,N$, the $n$'th of which is equal to $\rho^{(s)}$ evaluated at $z = \alpha_n$.  That is,
\begin{align}\label{equation:queryfunctions}
    \rho^{(s)}_n \triangleq \sum_{b = 1}^B\zeta_b^{(s)}(\alpha_n)\phi_b + \alpha_n^H\sum_{t = 1}^T\alpha_n^{t-1}\psi_t^{(s)}\in \cP_G.
\end{align}
Note that the coefficients~$\zeta_b^{(s)}(\alpha_n)$ and~$\alpha_n^{H+t-1}$ in~\eqref{equation:queryfunctions} are independent of the functions~$\phi_b$ and~$\psi_t^{(s)}$, and hence can be computed a priori. Here $\rho_n^{(s)}$ is transmitted to the $n$'th server during the $s$'th round, who responds with $\rho_n^{(s)}(\boldy_n)$.   We define the $s$'th \emph{response polynomial} $r^{(s)}(z)\in\Fq[z]$ to be
\[
\begin{aligned}
r^{(s)}(z) &\triangleq \rho^{(s)}(u_\boldX(z)) \\
&= \sum_{b = 1}^B\zeta_b^{(s)}(z)\phi_b(u_\boldX(z)) + z^H\sum_{t = 1}^Tz^{t-1}\psi_t^{(s)}(u_\boldX(z)) \\
&= \sum_{b = 1}^B\zeta_b^{(s)}(z)\gamma_b(z) + z^H\delta^{(s)}(z)
\end{aligned}
\]
where we define $\delta^{(s)}(z) \triangleq \sum_{t = 1}^T z^{t-1}\psi_t^{(s)}(u_\boldX(z))$.
Note that
\begin{equation}\label{noisepoly}
z^H\delta^{(s)}(z) \in \spn\{z^H,\ldots,z^{H+G(K+E-1)+T-1}\},
\end{equation}
and thus the evaluation vector of $z^H\delta^{(s)}(z)$, which contains only randomness, lives in a subspace of $\Fq^N$ of dimension $G(K+E-1)+T=N'-H$.

During the $s$'th round, the user observes the vector
\[
ev(r^{(s)}(z)) + \boldepsilon^{(s)}
\]
where $\boldepsilon^{(s)}$ is a vector which contains at most $P$ erasure symbols, coming from the $P$ stragglers, and at most $A$ arbitrary elements of $\Fq$, coming from the $A$ adversaries.    During each round, this will allow the user to decode the coefficients of $1,z,\ldots,z^{H-1}$ in $r^{(s)}(z)$, which will come from the terms in $\sum_{b = 1}^B\zeta_b^{(s)}(z)\gamma_b(z)$.  The general expression for $r^{(s)}(z)$ will be of the form
\begin{align*}
r^{(s)}(z) =\;& \underbrace{a_{-C}^{(s)}z^{-C} + \cdots + a_{-1}^{(s)}z^{-1}}_{\text{$a_i^{(s)}$ known from previous rounds}}\\
& + \underbrace{a_0^{(s)} + \cdots + a_{H-1}^{(s)}z^{H-1}}_{\text{$a_i^{(s)}$ decoded during $s$'th round}} + \underbrace{z^H\delta^{(s)}(z)}_{\text{noise}}
\end{align*}
for some $C$ which can depend on the round index $s$.  The assumption that $\alpha_n\neq 0$ for all $n$ guarantees that $ev(r^{(s)}(z))$ is a well-defined element of $\Fq^N$. Strictly speaking, $r^{(s)}(z)$ is not a polynomial as it will contain negative powers of~$z$, but the distinction is not relevant to the scheme construction, and so we refer to it as a polynomial for simplicity. 

During the $s$'th round, the decoding process roughly proceeds as follows.  The user subtracts off the evaluation vectors of $a^{(s)}_{-C}z^{-C},\ldots,a^{(s)}_{-1}z^{-1}$ from what they receive, and are left with an element of a Reed-Solomon code with parameters $[N,N',P+2A+1]$.  This allows them to correct the $P$ erasures and $A$ errors in the vector $\boldepsilon^{(s)}$.  After this decoding, they may obtain the elements $a_0^{(s)},\ldots,a^{(s)}_{H-1}$ as the coefficients of the evaluation vectors of $1,z,\ldots,z^{H-1}$, which are linearly independent from the evaluation vector of $z^H\delta^{(s)}$.

The scheme construction will guarantee that the sets $A^{(s)} = \{a_i^{(s)}\ |\ i = 0,\ldots,H-1\}$ of coefficients decoded during round $s$ each consist of $H$ unique coefficients of the polynomials $\gamma_b(z)$.  That is, $A^{(s)}\cap A^{(t)}=\varnothing$ for $s\neq t$, and since each has size $H$, the user will have decoded $HS = BL$ unique coefficients of the $B$ polynomials $\gamma_b(z)$ at the end of the scheme.  Since this is all of the coefficients of these polynomials, the user can reconstruct all $\gamma_b(z)$ entirely, and therefore compute
\[
\phi_b(\boldx_k) = \phi_b(u_\boldX(\beta_k)) = \gamma_b(\beta_k)
\]
for all $b = 1,\ldots,B$ and all $k = 1,\ldots,K$.  The PC rate is therefore
\begin{align*}
R = \frac{KB}{NS} = \;&\frac{N-(G(K+E-1)+T+P+2A)}{N}\cdot\\
& \frac{K}{G(K+E-1)+1}
\end{align*}
as claimed. 

\subsection{Construction of $\zeta_b^{(s)}(z)$ and Decoding} It remains to define the functions $\zeta_b^{(s)}(z)$ and describe the decoding process.  To begin, define the matrix
\[
\zeta(z) = \left(\zeta_b^{(s)}(z)\right)_{\substack{ 1\leq s\leq S \\ 1\leq b \leq B }} \in \Fq[z,z^{-1}]^{S\times B}
\]
and let $\zeta^{(s)}(z)$ be the $s$'th row of $\zeta(z)$.  We start by defining integers $Q_1$ and $U_1$ by using Euclidean division to write
\[
H = Q_1L + U_1,\quad Q_1 = \left\lfloor H/L \right\rfloor\quad\text{and}\quad 0\leq U_1< L.
\]
For the sake of presentation, we first define $\zeta^{(1)}(z)$ and $\zeta^{(2)}(z)$ and demonstrate the decoding process during the first two rounds of the scheme.  We then give a recursive definition of $\zeta_b^{(s)}(z)$ which works for every $s = 2,\ldots,S$.

\subsubsection{Round $s = 1$}  Define $\zeta^{(1)}(z)$ to be
\[
\zeta^{(1)}(z) = \begin{pmatrix}
1 & z^L & \cdots & z^{(Q_1-1)L} & z^{Q_1L} & 0 & \cdots & 0
\end{pmatrix}
\]
or equivalently,
\[
\zeta_b^{(1)}(z) = \left\{\begin{array}{cl}
z^{(b-1)L} & 1\leq b\leq Q_1+1 \\
0 & Q_1+2\leq b \leq B
\end{array}\right.
\]
The response polynomial $r^{(1)}(z)$ is then of the form
\begin{align*}
r^{(1)}(z) =\;& \gamma_1(z) + z^L\gamma_2(z) + \cdots + z^{(Q_1-1)L}\gamma_{Q_1}(z) +\\& z^{Q_1L}\gamma_{Q_1+1}(z) + z^H\delta^{(1)}(z)
\end{align*}
First, note that
\begin{align*}
\deg(z^{Q_1L}\gamma_{Q_1+1}(z))&\leq Q_1L + G(K+E-1)\\
&\leq H+G(K+E-1) \leq N'-1,
\end{align*}
that alongside~\eqref{noisepoly} implies that the evaluation vector of $r^{(1)}(z)$ lives in $\cR\cS_{N'}(\boldalpha)$, which is an MDS code with parameters $[N,N',P+2A+1]$ and can thus correct $P$ erasures and $A$ errors contained in the vector $\boldepsilon^{(s)}$.  After the erasure/error correction, the user is left with the evaluation vector of $r^{(1)}(z)$, which determines $r^{(1)}(z)$ completely.  Since $\deg(\gamma_b(z))\leq L$ for all $b$, the coefficients of $1,z,\ldots,z^{H-1}$ of the above allow the user to decode all of the coefficients of $\gamma_1(z),\ldots,\gamma_{Q_1}(z)$, and the first $U_1$ coefficients of $\gamma_{Q_1+1}(z)$.  That is, $A^{(1)} = \{\gamma_{10},\ldots,\gamma_{Q_1+1,U_1-1}\}$.  

\subsubsection{Round $s = 2$} Let $Q_2$ be the largest integer such that $-U_1+(Q_2-Q_1)L \leq  H$, and define $\zeta^{(2)}(z)$ by
\[
\zeta^{(2)}_b(z) = \left\{\begin{array}{cl}
0 & 1 \leq b\leq Q_1 \\
z^{-U_1+(b-(Q_1+1))L} & Q_1+1\leq b\leq Q_2 + 1 \\
0 & Q_2+2\leq b \leq B.
\end{array}\right.
\]
The response polynomial $r^{(2)}(z)$ is then of the form

\begin{alignat*}{2}
r^{(2)}(z) &=& z^{-U_1}\gamma_{Q_1+1}(z) + z^{-U_1+L}\gamma_{Q_1+2}(z) + \cdots \\
&~&+ z^{-U_1+(Q_2-Q_1)L}\gamma_{Q_2+1}(z) + z^H\delta^{(2)}(z) \\
&=& \underbrace{\gamma_{Q_1+1,0}z^{-U_1} + \cdots + \gamma_{Q_1+1,U_1-1}z^{-1}}_{\text{coefficients known from round $s = 1$}} +\\
&~&\underbrace{\begin{matrix}
 	\gamma_{Q_1+1,U_1} + \cdots + \gamma_{Q_2+1,0}z^{-U_1+(Q_2-Q_1)L} +\\
 	 \cdots + 	\gamma_{Q_2+1,U_2-1}z^{-U_1+(Q_2-Q_1)L+U_2-1}
 	\end{matrix}}_{\text{$H$ coefficients decoded during round $s=2$}} \\
&~&+ \underbrace{z^H\eta^{(2)}(z)}_{\text{noise}},
\end{alignat*}

where $U_2$ is chosen so that $-U_1+(Q_2-Q_1)L + U_2 - 1 = H - 1$, that is, so that the first $U_2$ coefficients of $\gamma_{Q_2+1}(z)$ are downloaded during round $2$.  Here $\eta^{(2)}(z)$ is a polynomial which incorporates the noise polynomial $\delta^{(2)}(z)$ and the monomials of $\gamma_{Q_2+1}(z)$ whose coefficients are \emph{not} downloaded during the $s$'th round.  One sees easily that 
\begin{align*}
\deg\left(z^{-U_1 + (Q_2-Q_1)L}\gamma_{Q_2+1}(z)\right)&\leq H + G(K+E-1)\\
&\leq N'-1
\end{align*}
from which we see that $\deg(z^H\eta^{(2)}(z))\leq N' - 1$ as well.

The coefficients $\gamma_{Q_1+1,0},\ldots,\gamma_{Q_1+1,U_1-1}$ are exactly the first $U_1$ coefficients of $\gamma_{Q_1}(z)$, and are known to the user from round $s = 1$.  The user can therefore subtract the vectors $\gamma_{Q_1+1,0}ev(z^{-U_1}),\ldots,\gamma_{Q_1+1,U_1-1}ev(z^{-1})$ from the response vector in round $2$.  After this subtraction, the user is left with the vector

\begin{align*}
\underbrace{ev\begin{pmatrix}
			\gamma_{Q_1+1,U_1} +  \cdots + \\
		 \gamma_{Q_2+1,U_2-1}z^{-U_1+(Q_2-Q_1)L+U_2-1} +\\ z^H\eta^{(2)}(z)
	\end{pmatrix}}_{\in \cR\cS_{N'}(\boldalpha)} + \epsilon^{(2)}
\end{align*}
where $\epsilon^{(2)}$ contains $P$ erasure symbols and $A$ non-zero elements of $\Fq$.  Because $\cR\cS_{N'}(\boldalpha)$ is an MDS code with parameters $[N,N',P+2A+1]$, the user can correct these erasures and errors.  After this erasure/error correction, the user further decodes the coefficients $A^{(2)} = \{\gamma_{Q_1+1,U_1},\ldots,\gamma_{Q_2+1,U_2-1}\}$ as the coefficients of the evaluation vectors of $1,z,\ldots,z^{H-1}$.

\subsubsection{Round $s = 2,\ldots,S$} We generalize the construction of the $\zeta_b^{(2)}(z)$ to the following recursive construction of $\zeta_b^{(s)}$ for all $s = 2,\ldots,S$.  Having already defined $U_{s-1}$ and $Q_{s-1}$, we define $Q_s$ to be the maximal integer such that $-U_{s-1}+(Q_s-Q_{s-1})L\leq H$, and then set
\[
\zeta_b^{(s)}(z) = \left\{\begin{array}{cl}
0 & 1\leq b\leq Q_{s-1} \\
z^{-U_{s-1} + (b-(Q_{s-1}+1))L} & Q_{s-1}+1\leq b\leq Q_s + 1 \\
0 & Q_s+2\leq b\leq B.
\end{array}\right.
\]
Lastly, define $U_s$ by the equation $-U_{s-1} + (Q_s-Q_{s-1})L + U_s - 1 = H - 1$.  In round $s - 1$, the user decoded the first $U_{s-1}$ coefficients of $\gamma_{Q_{s-1}}(z)$.  The response polynomial $r^{(s)}(z)$ can therefore be written as
\begin{alignat*}{2}
r^{(s)}(z) &=&& \underbrace{\gamma_{Q_{s-1},0}z^{-U_{s-1}} + \cdots + \gamma_{Q_{s-1},U_{s-1}-1}z^{-1}}_{\text{coefficients known from round $s - 1$}} \\
&~&& \underbrace{\begin{matrix}
	+\gamma_{Q_{s-1},U_{s-1}} + \cdots + \gamma_{Q_s,0}z^{-U_{s-1}+(Q_s-Q_{s-1})L} +\\ \cdots +  \gamma_{Q_s,U_s-1}z^{-U_{s-1} + (Q_s-Q_{s-1})L + U_s - 1}+
	\end{matrix}}_{\text{$H$ coefficients decoded during round $s$}} \\
 &~&&\underbrace{z^H\eta^{(s)}(z)}_{\text{noise}}.
\end{alignat*}
As in rounds $s = 1,2$, the user first subtracts the vectors corresponding to negative powers of $z$, the coefficients of which have been decoded in round $s - 1$.  As in rounds $s = 1,2$, one shows easily that $\deg(z^H\eta^{(s)}(z))\leq N'-1$, hence what is left after this subtraction is an element of $\cR\cS_{N'}(\boldalpha)$ plus a vector $\boldepsilon^{(s)}$ with $P$ erasure symbols and $A$ non-zero elements of $\Fq$.  After performing erasure/error correction in this Reed-Solomon code, the user may now decode the set $A^{(s)} = \{\gamma_{Q_{s-1},U_{s-1}},\ldots,\gamma_{Q_s,U_s-1}\}$ as the coefficients of the evaluation vectors of $1,z,\ldots,z^{H-1}$.  We see that during the $s$'th round, the user obtains all coefficients of $Q_s-Q_{s-1}$ distinct polynomials $\gamma_b(z)$, and $U_s$ coefficients of a single additional polynomial.

After $S$ rounds, the user has decoded all of the sets $A^{(s)}$ for $s = 1,\ldots,S$.  Since they are all clearly disjoint and of size $H$, the user has obtained $HS = BL$ coefficients of the $\gamma_b(z)$, which suffices to reconstruct all of these polynomials.  The scheme construction is therefore complete.

\subsection{Proof of $T$-function Privacy}
Since a function in~$\cP_G$ can be seen as a vector over~$\bF_q$, it follows from~\eqref{equation:queryfunctions} that the query functions are of the following form.
\begin{align*}
    \begin{bmatrix}\rho_1^{(s)}\\\vdots\\\rho_N^{(s)}\end{bmatrix}^\top=\begin{bmatrix}\phi_1\\\vdots\\\phi_B\\\psi_1^{(s)}\\\vdots\\\psi_T^{(s)}\end{bmatrix}^\top
    \begin{pmatrix}
        \zeta_1^{(s)}(\alpha_1) & \ldots & \zeta_1^{(s)}(\alpha_N)\\
        \vdots & \ddots & \vdots \\
        \zeta_B^{(s)}(\alpha_1) & \ldots & \zeta_B^{(s)}(\alpha_N)\\
        \alpha_1^{H} & \ldots & \alpha_N^H \\
        \alpha_1^{H+1}  & \ldots & \alpha_N^{H+1} \\
        \vdots & \ddots & \vdots \\
        \alpha_1^{H+T-1}  & \ldots & \alpha_N^{H+T-1} \\
    \end{pmatrix}.
\end{align*}
It is readily verified by Lemma~\ref{lemma:Shamir} that $I(\phi_1,\ldots,\phi_B;\rho_{n_1}^{(s)},\ldots,\rho_{n_T}^{(s)}) = 0$ for any $T$-subset $\mathcal{T}$ of $[N]$, and any single $s = 1,\ldots,S$.  Since the $\psi_t^{(s)}$ are chosen independently between rounds, the result follows.

\section{Example}
In this section we illustrate some of the subtleties of the scheme construction through a detailed example for the parameters $N = 14$, $K = 2$, $G = 2$, $T = 1$, $P = 1$, $A = 1$, $E = 2$, which achieves
a PC rate of
\begin{align*}
R =&\; \frac{N-(G(K+E-1)+T+P+2A)}{N}\cdot\\ &\;\frac{K}{G(K+E-1)+1} = \frac{4}{49}.
\end{align*}
The auxiliary parameters are given by $N' = 11$, $H = 4$, $L = 7$, $B = 4$, and $S = 7$.  The data encoding operates by choosing two i.i.d.\ uniform random vectors $\boldt_1$, $\boldt_2$ and distinct elements $\beta_1,\ldots,\beta_4\in\Fq$, and defining the interpolating polynomial $u_\boldX(z)$ by the conditions
\[
u_\boldX(\beta_1) = \boldx_1,\ u_\boldX(\beta_2) = \boldx_2,\ u_\boldX(\beta_3) = \boldt_1,\ u_\boldX(\beta_4) = \boldt_2.
\]
We have $\deg(u_\boldX(z))\leq 3$.  Choosing some $\alpha_1,\ldots,\alpha_{14}\in\Fq$ to all be non-zero and distinct from $\beta_1$ and~$\beta_2$, we encode by setting $\boldy_n = u_\boldX(\alpha_n)$ for all $n = 1,\ldots,14$.

The scheme will compute, over $S = 7$ rounds, the function values $\phi_b(\boldx_k)$ for $B = 4$ quadratic functions $\phi_b$.  The polynomials $\gamma_b(z) = \phi_b(u_\boldX(z))$ whose coefficients will be downloaded have degree $\deg(\gamma_b(z))\leq 6$, and hence are each completely determined by $L = 7$ coefficients.

The query functions $\rho^{(s)}_n$ take the form
\[
\rho^{(s)}_n = \sum_{b = 1}^B\zeta_b^{(s)}(\alpha_n)\phi_b + \alpha_n^4\psi_1^{(s)}
\]
where $\psi_1^{(s)}$ is a uniform random element of $\cP_2$.  The functions $\zeta_b^{(s)}(z)$ are given by
\[
\zeta(z) = \left(\zeta_b^{(s)}(z)\right)_{\substack{1\leq s\leq 7 \\1\leq b\leq 4 }} = \begin{pmatrix}
1 & 0 & 0 & 0 \\
z^{-4} & z^3 & 0 & 0 \\
0 & z^{-1} & 0 & 0 \\
0 & z^{-5} & z^2 & 0 \\
0 & 0 & z^{-2} & 0 \\
0 & 0 & z^{-6} & z \\
0 & 0 & 0 & z^{-3}
\end{pmatrix}
\]
from which we can compute the response polynomials $r^{(s)}(z)$ to be
\begin{align*}
r^{(1)}(z) =&\;  \gamma_{10} + \gamma_{11}z + \gamma_{12}z^2 + \gamma_{13}z^3 + z^4\eta^{(1)}(z) \\
r^{(2)}(z) =&\;   \gamma_{10}z^{-4} +   \gamma_{11}z^{-3} +  \gamma_{12}z^{-2} +  \gamma_{13}z^{-1} \ +\  \gamma_{14} +\\
&\; \gamma_{15}z + \gamma_{16}z^2 + \gamma_{20}z^3 + z^4\eta^{(2)}(z) \\
r^{(3)}(z) =&\;  \gamma_{20}z^{-1}  \ +\  \gamma_{21} + \gamma_{22}z + \gamma_{23}z^2 +\\
&\; \gamma_{24}z^3 + z^4\eta^{(2)}(z) \\
r^{(4)}(z) =&\;  \gamma_{20}z^{-5} +  \gamma_{21}z^{-4} +  \gamma_{22}z^{-3} +  \gamma_{23}z^{-2} + \\
&\; \gamma_{24}z^{-1} \ +\  \gamma_{25} + \gamma_{26}z + \gamma_{30}z^2 + \\
&\;\gamma_{31}z^3 + z^4\eta^{(4)}(z) \\
r^{(5)}(z) =&\;  \gamma_{30}z^{-2} +  \gamma_{31}z^{-1}  \ +\  \gamma_{32} + \gamma_{33}z + \gamma_{34}z^2 +\\
&\; \gamma_{35}z^3 + z^4\eta^{(5)}(z) \\
r^{(6)}(z) =&\;  \gamma_{30}z^{-6} +  \gamma_{31}z^{-5} +  \gamma_{32}z^{-4} +  \gamma_{33}z^{-3} +\\
&\;  \gamma_{34}z^{-2} +  \gamma_{35}z^{-1}  \ +\  \gamma_{36} + \gamma_{40}z + \gamma_{41}z^2 +\\
&\; \gamma_{42}z^3 + z^4\eta^{(6)}(z) \\
r^{(7)}(z) =&\;  \gamma_{40}z^{-3} + \gamma_{41}z^{-2} +  \gamma_{42}z^{-1}  \ +\   \gamma_{43} + \gamma_{44}z +\\
&\; \gamma_{45}z^2 + \gamma_{46}z^3 + z^4\eta^{(7)}(z)
\end{align*}
In the first round, the polynomial $r^{(1)}(z)$ has degree $\leq N'-1 = 10$, hence its evaluation vector lives in a Reed-Solomon code with parameters $[14,11,4]$, allowing the user to correct the erasure and error from the straggler and adversary.  The user then decodes $\gamma_{10},\ldots,\gamma_{13}$.  In all successive rounds, it is clear from the above expressions that all coefficients of negative powers of $z$ are decoded in previous rounds, and can thus be subtracted from the response vector.  After this subtraction, what is left is in a Reed-Solomon code with parameters $[14,11,4]$, which can correct $P = 1$ erasure and $A = 1$ error.  The user performs this erasure/error correction, and then decodes the coefficients of $1,z,z^2,z^3$ in every round.

Having decoded all $\gamma_{b\ell}$ for $b = 1,\ldots,4$ and $\ell = 0,\ldots,6$, the user can compute $\phi_b(\boldx_k) = \gamma_b(\beta_k)$ for $b = 1,\ldots,4$ and $k = 1,2$.  The user has computed $KB = 8$ function evaluations while downloading $NS = 98$ elements of $\Fq$ total, hence the rate of the scheme is
\[
R = \frac{KB}{NS} = \frac{8}{98} = \frac{4}{49}
\]
as claimed.

\section{Improved PC rate for systematic encoding}\label{section:SystematicImprovement}

In this section it is shown that whenever~$P=A=E=0$ and~$N\le 2G(K-1)+T+1$, a larger PC rate is achievable by systematic encoding. The choice of the specific systematic code is not important, so one can simply assume that the data is encoded with systematic Lagrange encoding (i.e., where~$\beta_i=\alpha_i$ for all~$i\in[K]$ in~\eqref{equation:LagrangeMatrix}). The scheme in this section appeared in~\cite{Karpuk}.

First, let the storage code be $\cC = \cR\cS_K(\boldalpha)$.  Define~$\cD=\cR\cS_T(\boldalpha)$, called the \textit{retrieval code} of the scheme, and let~$\cE\triangleq \cC^{\star G}\star \cD$. It follows from Lemma~\ref{lemma:LagrangeisRS} and Proposition~\ref{proposition:starRS} that~$\cE=\cR\cS_{G(K-1)+T}(\boldalpha)$. Denote the minimum distance of~$\cE$ by~$D_\cE$, and let~$F\triangleq D_\cE-1=N-G(K-1)-T$. Next, let~$B$ and~$S$ be the minimal integers that satisfy~$\min\{F,K\}S=KB$ (i.e., $S=\frac{\lcm(K,\min\{F,K\})}{\min\{F,K\}}$ and $B=\frac{\lcm(K,\min\{ F,K \})}{K}$), and choose subsets of~$[K]$ --
\begin{align}\label{equation:queriesI}
    \cI^{(1)} &= \cI^{(1,1)}\cup \cI^{(1,2)}\cup \ldots \cup \cI^{(1,B)}\nonumber\\
    \cI^{(2)} &= \cI^{(2,1)}\cup \cI^{(2,2)}\cup \ldots \cup \cI^{(2,B)}\nonumber\\
    &\vdots\nonumber\\
    \cI^{(S)} &= \cI^{(S,1)}\cup \cI^{(S,2)}\cup \ldots \cup \cI^{(S,B)},
\end{align}
such that in each row the sets in the union are pairwise disjoint, such that~$|\cI^{(s)}|=\min\{F,K\}$ for every~$s\in[S]$, and such that for every~$b\in[B]$ we have that~$\cup_{s=1}^S \cI^{(s,b)}=[K]$.  The process of choosing the sets~$\{ \cI^{(s,b)} \}_{(s,b)\in[S]\times[B]}$ in~\eqref{equation:queriesI} is very simple, and similar constructions have appeared in \cite{StarProduct,Karpuk}.  For our current purposes, the choice of these sets is best illustrated by the following example. 
\begin{example}
    Assume that~$F=4$ and~$K=6$, which implies that~$S=3$ and~$B=2$. Consider the following matrix
    \begin{align*}
        \begin{pmatrix}
            1 & 1 & 1 & 1 &   & \\ 
            2 & 2 &   &   & 1 & 1 \\
              &   & 2 & 2 & 2 & 2 \\
        \end{pmatrix},
    \end{align*}
    which naturally corresponds to the sets
    \begin{align*}
        \cI^{(1,1)}& =\{1,2,3,4\} & \cI^{(1,2)}&=\varnothing\\
        \cI^{(2,1)}& =\{5,6\} & \cI^{(2,2)}&=\{1,2\}\\
        \cI^{(3,1)}& =\varnothing & \cI^{(3,2)}&=\{3,4,5,6\} .
    \end{align*}
One verifies immediately that the sets $\{\mathcal{I}^{(s,b)}\}_{(s,b)\in[S]\times[B]}$ have the stated properties.
\end{example}

To continue with the scheme construction, let~$Q$ be the dimension of~$\cP_G$, and fix a basis~$\psi^1,\ldots,\psi^Q$ of~$\cP_G$. In round~$s\in[S]$, the user employs fresh randomness in order to choose codewords~$\boldd^1,\ldots,\boldd^Q\in\cD$, and defines
\begin{align}\label{equation:randomNoise}
        \psi_n = \boldd^1(n)\cdot \psi^1+\cdots+\boldd^Q(n)\cdot \psi^Q
\end{align}
for every~$n\in[N]$, where~$\boldd^i(n)$ is the~$n$'th entry of~$\boldd^i$. Then, the user defines the queries
\begin{align*}
	\rho_n^{(s)} = 
	\begin{cases}
		\psi_n+\phi_b & \substack{\mbox{if there exists }b\in[B]\\\mbox{such that }n\in \cI^{(s,b)}}\\
		\psi_n & \mbox{else}\\
	\end{cases}.
\end{align*}

Intuitively, the set~$\cI^{(s)}\subseteq [N]$ contains the indices of the~$\boldy_i$'s for which some~$\phi_b(\boldy_i),b\in[B]$ is retrieved during the~$s$'th round. In addition, for every~$b\in[B]$, the set~$\cI^{(s,b)}$ contains the indices of the~$\boldy_i$'s such that~$\phi_b(\boldy_i)$ is retrieved during the~$s$'th round; formally, we have that
\begin{align*}
    \cI^{(s,b)}\triangleq \{ t\in[N]~\vert~ \phi_b(\boldy_t)\mbox{ is retrieved in round }s \}.
\end{align*}

The~$T$-function-privacy relies on the following lemma, analogues of which were fully proved in~\cite[Theorem~1]{Karpuk} and in~\cite[Theorem~8]{StarProduct}, among others.
\begin{lemma}\label{lemma:PerfectPrivacy}
    If~$\cD$ is an MDS code then the scheme is~$T$-function-private.
\end{lemma}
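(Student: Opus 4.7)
The plan is to reduce the lemma to Lemma~\ref{lemma:Shamir} (Shamir) by viewing the noise $\psi_n$ as a coordinate of a linear code whose generator on the randomness block is MDS. First I would fix a round $s\in[S]$ and a $T$-subset $\cT=\{n_1,\ldots,n_T\}\subseteq[N]$, and stack the random codewords $\boldd^1,\ldots,\boldd^Q$ as rows of a $Q\times N$ matrix $\boldD$; by construction, the $n$-th entry of $\psi_n$ in the basis $\psi^1,\ldots,\psi^Q$ is the $n$-th column of $\boldD$. Since each row is drawn uniformly and independently from $\cD=\cR\cS_T(\boldalpha)$, which is MDS of dimension $T$, the restriction of any single row to the $T$ columns indexed by $\cT$ is a uniform element of $\Fq^T$, and independence across rows persists under the restriction.

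Next I would use this to argue that, viewed through the basis $\psi^1,\ldots,\psi^Q$, the tuple $(\psi_{n_1},\ldots,\psi_{n_T})$ is a uniform element of $\cP_G^T$, independent of $\phi_1,\ldots,\phi_B$ (the user's functions are drawn independently from the random $\boldd^i$'s). Consequently, for each $n_j\in\cT$, the query $\rho_{n_j}^{(s)}$ is either $\psi_{n_j}$ or $\psi_{n_j}+\phi_b$ for some $b$ dictated by the deterministic sets $\cI^{(s,b)}$; since translation by a fixed element of the $\Fq$-vector space $\cP_G$ preserves uniformity, the collection $(\rho_{n_1}^{(s)},\ldots,\rho_{n_T}^{(s)})$ is itself uniform on $\cP_G^T$ and independent of $(\phi_1,\ldots,\phi_B)$. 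This gives $I(\phi_1,\ldots,\phi_B;\rho_{n_1}^{(s)},\ldots,\rho_{n_T}^{(s)})=0$ for the single round.

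Formally, this is exactly the setup of Lemma~\ref{lemma:Shamir} with $L=0$ ``secrets'' and $X=T$ randomness vectors (one per basis coefficient), applied $Q$ times in parallel to the matrix $\boldB$ whose columns are the evaluations defining $\cD$; its MDS-ness is precisely the hypothesis of the present lemma. To extend from one round to all $S$ rounds, I would observe that fresh randomness $\boldd^1,\ldots,\boldd^Q$ is drawn independently in each round, so the joint distribution of $\rho_\cT=(\rho_{n}^{(s)})_{n\in\cT,\,s\in[S]}$ factorizes into $S$ copies, each of which is independent of $(\phi_1,\ldots,\phi_B)$ by the single-round argument; hence $I(\phi_1,\ldots,\phi_B;\rho_\cT)=0$.

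The only subtle step is the basis translation: one must identify $\cP_G$ with $\Fq^Q$ through $\psi^1,\ldots,\psi^Q$ so that the random functions $\psi_n$ become $\Fq$-linear images of the codewords $\boldd^i$. Once this identification is made, MDSness of $\cD$ feeds directly into Shamir's lemma and the remainder of the argument is routine; I do not anticipate any obstacle beyond being careful about the role of $Q$ independent copies of the code $\cD$ and the fact that the ``added'' functions $\phi_b$ are constants relative to the randomness, so they do not disturb the uniformity.
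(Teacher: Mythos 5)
The paper does not actually prove this lemma; it defers to analogous results in the cited references (Theorem~1 of \cite{Karpuk} and Theorem~8 of \cite{StarProduct}). Your proof is a correct reconstruction of the standard argument used there, and the core reasoning is right: expressing $\psi_n$ in the basis $\psi^1,\ldots,\psi^Q$ identifies its coefficient vector with the $n$'th column of the $Q\times N$ matrix $\boldD$ whose rows are the i.i.d.\ uniform codewords $\boldd^i\in\cD$; MDS-ness of $\cD$ makes any $T$ columns jointly uniform on $\Fq^{QT}$, so $(\psi_{n_1},\ldots,\psi_{n_T})$ is uniform on $\cP_G^T$, and since the $\cI^{(s,b)}$ are deterministic and the $\boldd^i$ are drawn independently of the $\phi_b$, the translation argument shows the conditional law of the $T$ queries given $\phi_1,\ldots,\phi_B$ is uniform and constant, giving zero mutual information. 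Fresh per-round randomness then extends the conclusion to all $S$ rounds.

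One phrasing issue worth correcting: your reduction to Lemma~\ref{lemma:Shamir} "with $L=0$ secrets" is not the right instantiation. With $L=0$ the Shamir lemma is vacuous (there is nothing to hide) and it does not by itself deliver the uniformity of the $\psi_{n_j}$. The tidy Shamir reduction here takes $L=B$ secrets $\phi_1,\ldots,\phi_B$, $X=T$ randomness vectors (write each $\boldd^i$ as $\boldc_i\boldG_\cD$ for $\boldc_i\in\Fq^T$ uniform and regroup so that the per-basis coefficients $c_{j,i}$ give $T$ i.i.d.\ uniform elements $\tilde\psi_1,\ldots,\tilde\psi_T$ of $\cP_G$), the secret block $\boldA$ equal to the $0$-$1$ incidence matrix of the sets $\cI^{(s,b)}$, and the randomness block $\boldB=\boldG_\cD$, whose MDS-ness is exactly the hypothesis. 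That said, your direct conditional-uniformity argument is self-contained and does not actually need Shamir's lemma, so the slip does not affect correctness.
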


In turn, the response vector is of the form
\begin{align}\label{equation:responseSystematic}
	\boldrho^{(s)}(\boldY)=\underbrace{[\psi_1(\boldy_1),\ldots,\psi_N(\boldy_N)]}_{\triangleq \boldpsi(\boldY)}+\;\boldv^{(s)},
\end{align}
where
\begin{align*}
	(\boldv^{(s)})_n=
	\begin{cases}
		\phi_b(\boldx_n) & \substack{\mbox{if there exists }b\in[B]\\\mbox{such that }n\in \cI^{(s,b)}}\\
		0 & \mbox{else}
	\end{cases},
\end{align*}
and notice that~$\phi(\boldy_n)=\phi_b(\boldx_n)$ merely since~$\cI^{(s)}\subseteq[K]$, and since the encoding is systematic.

\begin{lemma}\label{lemma:Correctness}
    The codeword~$\boldpsi(\boldY)\triangleq [\psi_1(\boldy_1),\ldots,\psi_N(\boldy_N)]$ is in~$\cE$.
\end{lemma}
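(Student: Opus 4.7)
The plan is to decompose $\boldpsi(\boldY)$ into a sum of star products of vectors known to lie in $\cC^{\star G}$ and $\cD$, respectively, and then invoke the definition of $\cE$. To begin, I would recall that by Proposition~\ref{proposition:starRS} and Lemma~\ref{lemma:LagrangeisRS} we have
\[
\cE = \cC^{\star G}\star \cD = \cR\cS_{G(K-1)+1}(\boldalpha)\star \cR\cS_T(\boldalpha) = \cR\cS_{G(K-1)+T}(\boldalpha),
\]
so it suffices to exhibit a representation of $\boldpsi(\boldY)$ as an $\bF_q$-linear combination of vectors of the form $\boldc\star \boldd$ with $\boldc\in \cC^{\star G}$ and $\boldd\in \cD$.

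The next step is to analyze each basis function $\psi^i$ individually. Since the data is encoded via Lagrange encoding, $\boldy_n = u_\boldX(\alpha_n)$, where $u_\boldX(z)\in\Fq[z]^M$ has each coordinate of degree at most $K-1$. Hence, for every $i\in[Q]$, the composition $z\mapsto \psi^i(u_\boldX(z))$ is a univariate polynomial of degree at most $G(K-1)$, because $\deg(\psi^i)\le G$ as a polynomial in $M$ variables. Consequently the vector
\[
\boldc^i \triangleq (\psi^i(\boldy_1),\ldots,\psi^i(\boldy_N)) = \ev(\psi^i(u_\boldX(z)))
\]
lies in $\cR\cS_{G(K-1)+1}(\boldalpha) = \cC^{\star G}$.

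Finally, I would substitute the definition~\eqref{equation:randomNoise} of $\psi_n$ into $\boldpsi(\boldY)$. Evaluating at the $n$'th coordinate,
\[
\psi_n(\boldy_n) = \sum_{i=1}^Q \boldd^i(n)\,\psi^i(\boldy_n),
\]
which rewritten as a vector identity yields
\[
\boldpsi(\boldY) = \sum_{i=1}^Q \boldd^i \star \boldc^i.
\]
Each summand is a star product of an element of $\cD$ with an element of $\cC^{\star G}$, and hence belongs to $\cC^{\star G}\star \cD = \cE$. Since $\cE$ is closed under addition, we conclude that $\boldpsi(\boldY)\in\cE$, as required. I do not anticipate any significant obstacle here; the only minor care needed is to verify that $\psi^i\circ u_\boldX$ has degree at most $G(K-1)$, which follows directly from the degree bounds on $\psi^i$ and on each coordinate of $u_\boldX$.
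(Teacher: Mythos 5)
Your proof is correct, and it shares its opening step (the rewriting $\boldpsi(\boldY)=\sum_{i=1}^Q \boldc^i\star\boldd^i$ with $\boldc^i = (\psi^i(\boldy_1),\ldots,\psi^i(\boldy_N))$ and $\boldd^i\in\cD$) with the paper's argument, but you establish $\boldc^i\in\cC^{\star G}$ by a genuinely different route. The paper picks a \emph{monomial} basis $\psi^j=X_1^{a_1}\cdots X_M^{a_M}$ and observes that the columns $[y_{1,m},\ldots,y_{N,m}]$ are themselves codewords of $\cC$, so $\boldc^j$ is literally a star product of at most $G$ codewords of $\cC$, hence lies in $\cC^{\star G}$; this argument is code-agnostic and would work for any storage code $\cC$, not just Reed--Solomon. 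You instead exploit the Reed--Solomon / Lagrange-encoding structure directly: each coordinate of $u_\boldX(z)$ has degree $\le K-1$, so $\psi^i\circ u_\boldX$ has degree $\le G(K-1)$, and hence $\boldc^i=\ev(\psi^i\circ u_\boldX)\in\cR\cS_{G(K-1)+1}(\boldalpha)=\cC^{\star G}$ by Proposition~\ref{proposition:starRS}. Your argument is shorter and cleaner in the RS setting, while the paper's is more robust to changing the storage code; both are fully valid in the context of Section~\ref{section:SystematicImprovement}, where $\cC=\cR\cS_K(\boldalpha)$ is assumed.
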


\begin{proof}
    According to~\eqref{equation:randomNoise}, one can readily verify that
    \begin{alignat*}{1}
        &[\psi_1(\boldy_1),\ldots,\psi_N(\boldy_n)]=\\&\left[\sum_{j=1}^Q\boldd^j(1)\cdot\psi^j(\boldy_1),\ldots,\sum_{j=1}^Q\boldd^j(N)\cdot \psi^{j}(\boldy_N)\right]\\
        &=\sum_{j=1}^Q\left[ \boldd^j(1)\cdot \psi^j(\boldy_1),\ldots,\boldd^j(N)\cdot \psi^j(\boldy_N)
        \right]\\
        &= \sum_{j=1}^Q  [\psi^j(\boldy_1),\ldots,\psi^j(\boldy_N)]\star\boldd^j,
    \end{alignat*}
    and hence it remains to prove that~$[\psi^j(\boldy_1),\ldots,\psi^j(\boldy_N)]\in\cC^{\star G}$ for every~$j\in[Q]$. To this end, assume that the chosen basis~$\psi^1,\ldots,\psi^Q$ consists only of monomials, and let~$\psi^j=X_1^{a_1}X_2^{a_2}\cdots X_M^{a_M}$ for some~$j\in[Q]$. Denoting~$\boldy_n \triangleq [y_{n,1},y_{n,2},\ldots,y_{n,M}]$ for every~$n\in[N]$, which implies that~$ [y_{1,m},y_{2,m},\ldots,y_{N,m}]\in\cC $ for every~$m\in[M]$, implies that
    \begin{alignat*}{2}
        [\psi^j(\boldy_1),\ldots,&\psi^j(\boldy_N)]=\\
        &\left[ \prod_{m=1}^M y_{1,m}^{a_m}, \prod_{m=1}^M y_{2,m}^{a_m},\ldots,\prod_{m=1}^M y_{N,m}^{a_m} \right]\\
        &=[y_{1,1},y_{2,1},\ldots,y_{N,1}]^{\star a_1}\star\ldots\star\\
        & \qquad[y_{1,M},y_{2,M},\ldots,y_{N,M}]^{\star a_M}\\
        &\in \cC^{\star a_1}\star \ldots \star \cC^{\star a_M}\subseteq \cC^{\star G}.\qedhere
    \end{alignat*}
\end{proof}

Since~$\boldpsi(\boldY)$ is a codeword in~$\cE$, and since the Hamming weight of~$\boldv^{(s)}$ is at most~$|\cI^{(s)}|\le F=D_\cE-1$, it follows 
that the vector~$\boldv^{(s)}$ can be extracted from~$\boldrho^{(s)}(\boldY)$ by an error correction algorithm for~$\cE$. Therefore, according to the definition of the sets~$\{ \cI^{(s,b)} \}_{(s,b)\in[S]\times [B]}$, it follows that the user can retrieve~$\{ \phi_b(\boldx_k) \}_{(k,b)\in [K]\times [B]}$, which concludes the correctness of the scheme, whose PC rate is
\begin{align}\label{equation:RateSystematic}
    \frac{KB}{NS}=\frac{\min\{ F,K \}}{N}.
\end{align}
Recall that for the parameters~$P=A=E=0$, the PC rate of the general scheme~\eqref{mainrate} is
\begin{align}\label{equation:mainrateSystematic}
    \frac{N-G(K-1)-T}{N}\cdot\frac{K}{G(K-1)+1}.
\end{align}
Therefore, in order to compare~\eqref{equation:RateSystematic} with~\eqref{equation:mainrateSystematic}, notice that if~$N> 2G(K-1)+T+1$, then since~$G\ge 1$, it follows that
\begin{align}\label{equation:RateCompare}
    F\ge G(K-1)+1\ge K,
\end{align}
and thus~\eqref{equation:RateSystematic} reduces to~$K/N$. Moreover, \eqref{equation:RateCompare} also implies that
\begin{align*}
    \frac{F}{N}\cdot \frac{K}{G(K-1)+1}\ge \frac{K}{N},
\end{align*}
and hence for this parameter regime we have that~\eqref{equation:mainrateSystematic} is superior to~\eqref{equation:RateSystematic}. Conversely, if~$N\le 2G(K-1)+T+1$, then~$F\le G(K-1)+1$, and therefore 
\begin{align*}
    \frac{F}{N}\cdot \frac{K}{G(K-1)+1}\le \frac{K}{N}.
\end{align*}
Also, since~$K\le G(K-1)+1$, we have that
\begin{align*}
    \frac{F}{N}\cdot \frac{K}{G(K-1)+1}\le \frac{F}{N},
\end{align*}
and hence
\begin{align*}
    \frac{F}{N}\cdot \frac{K}{G(K-1)+1}\le \frac{\min\{F,K\}}{N},
\end{align*}
which implies the superiority of the systematic scheme for~$N\le 2G(K-1)+T+1$.

\begin{figure}
    \centering
    \includegraphics[scale=0.5]{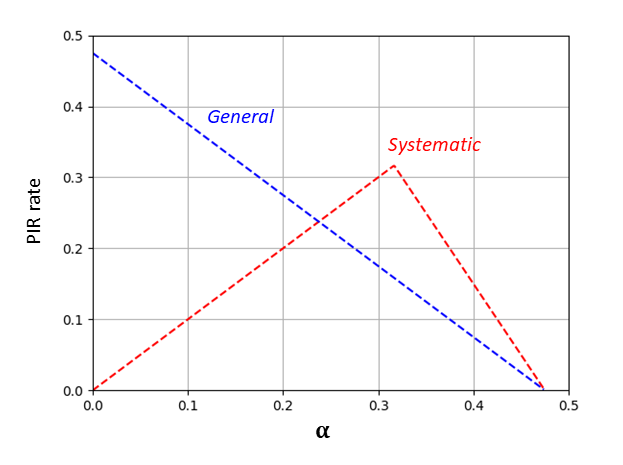}
    \caption{Asymptotic comparison of PC rates for various values of~$\alpha = K/N$ between the general scheme (Subsection~\ref{section:MainScheme}) and the systematic one (Subsection~\ref{section:SystematicImprovement}) for~$G=2$ and~$\beta= T/N =0.05$.}
    \label{fig:sysVSnonsys}
\end{figure}

To compare~\eqref{equation:RateSystematic} and~\eqref{equation:mainrateSystematic} asymptotically, assume that~$N,K,$ and~$T$ approach infinity, and yet~$\alpha\triangleq K/N$ and~$\beta\triangleq T/N$ remain constants. The resulting PC rates are~$\frac{1-\beta}{G}-\alpha$ for the general scheme, and~$\min\{ 1-\alpha G-\beta,\alpha \}$ for the systematic one, and the necessary condition~$N\ge G(K-1)+T+1$ translates to~$\alpha G+\beta\le 1$. An example for~$G=2$ and~$\beta=0.05$ is given in Figure~\ref{fig:sysVSnonsys}.

\section{Discussion and open problems}
In this paper we leveraged a recent notion from coded computing and obtained a robust private computing scheme for computing polynomials of arbitrary degree on encoded data, that can also guarantee privacy of the data.  The scheme also accounts for servers which are stragglers or adversaries, and is robust against servers colluding to attempt to deduce the identity of the functions to be evaluated. Many open questions remain, the most prominent of which is the capacity question, i.e., constructing schemes for this scenario with provably optimal rate.  However, as mentioned in Section~\ref{section:intro}, the capacity question of PIR is largely open, even without extending its scope to private computation. Thus, proving optimality is likely to be very difficult, even in restricted cases where some of the system parameters are zero. Another possible research direction is coming up with a proper notion of perfect privacy for \textit{real-valued} computations, which applies directly without embedding into a finite field.

\section*{Acknowledgements}
The first author would like to thank Prof.~Jehoshua Bruck for many helpful discussions. The second author would like to thank Razane Tajeddine and Oliver Gnilke for constructive and helpful conversations regarding the results of the current work.

\end{document}